\documentclass{article}
 
\newlength{\defbaselineskip}
\RequirePackage[T1]{fontenc}
\RequirePackage[tt=false, type1=true]{libertine}
\RequirePackage[varqu]{zi4}
\RequirePackage[libertine]{newtxmath}
 
\usepackage[authoryear,round,sort&compress]{natbib}

\usepackage[parfill]{parskip}
\usepackage{microtype}
\usepackage{url}
\usepackage{booktabs}
\usepackage{subcaption}
\usepackage{multirow}
\usepackage{tcolorbox}
\usepackage{lineno}

\usepackage[colorlinks=true,linkcolor=blue,citecolor=magenta,urlcolor=red]{hyperref}

\definecolor{darkblue}{rgb}{0, 0, 0.5}
\hypersetup{colorlinks=true, citecolor=darkblue, linkcolor=darkblue, urlcolor=darkblue}
\usepackage{xcolor}

\usepackage{graphicx}
\usepackage{algorithm, algorithmicx, algpseudocode}
\usepackage{amsmath}
\usepackage{amssymb}
\usepackage{mathtools}
\usepackage{amsthm}
\usepackage{tabularx}
\usepackage{listings}
\usepackage{arydshln}

\usepackage{amsfonts}       
\usepackage{nicefrac}       
\usepackage{microtype}      
\usepackage{xcolor}         
\usepackage{graphicx}
\usepackage{multirow}
\usepackage{ifthen}
\usepackage{etoolbox}
\usepackage[textsize=tiny]{todonotes}
\definecolor{mydarkgray}{gray}{0.3}  

\definecolor{myred}{RGB}{200, 50, 50}
\definecolor{myblue}{RGB}{50, 50, 200}

\newcommand{\prompts}{\mathcal{X}}
\newcommand{\benign}{\mathcal{B}}
\newcommand{\malicious}{\mathcal{M}}
\newcommand{\data}{\mathcal{D}}
\newcommand{\llm}{\texttt{LLM}}
\newcommand{\llmr}{\texttt{LLM}_R}

\newcommand{\projectname}{\textsc{OBBR}}
\newcommand{\projectnameZero}{\textsc{CBBR}}
\newcommand{\model}[1]{\textsf{#1}}
\usepackage{makecell}
\usepackage{algorithm}
\usepackage{algpseudocode}
\usepackage{caption}

\usepackage{amsmath}
\usepackage{amsthm}
\usepackage{listings} 
\newtheorem{theorem}{Theorem}
\title{Be Kind, Rewrite: Benign Projections via Rewriting Defend Against LLM Data Poisoning Attacks}

\usepackage{authblk}
\author[$^{1}$]{John T. Halloran\thanks{To whom correspondence should be sent.  Alternative contact: halloj3@uw.edu}}
\author[$^{12}$]{Noopur S. Bhatt}

\affil[ ]{\normalsize $^1$Leidos \quad $^2$University of Pennsylvania}
\affil[ ]{{\texttt{\{john.t.halloran, noopur.bhatt\}@leidos.com}}}

\begin{document}

\maketitle

\begin{abstract}
Large language models (LLMs) are highly susceptible to \emph{backdoor attacks} (BAs), wherein training samples are poisoned using trigger-based harmful content.  Furthermore, existing defenses have proven ineffective when extensively tested across BA patterns.  To better combat BAs, we explore the use of LLM rewriting as a proactive defense against data poisoning.  First, we theoretically show that when LLM rewriting utilizes open-book benign samples---termed open-book benign rewriting (OBBR)---the probability of a rewritten output being benign is strictly greater than that of closed-book rewriting.  Thus, OBBR neutralizes harmful content by projecting training samples to the space of benign prompts.  We then show that, in contrast to previous defenses, OBBR effectively mitigates a large number of existing BAs: across five known BAs and four widely used LLMs, OBBR increases safety performance by an average $51$\% compared to state-of-the-art BA defenses and $25.7$\% compared to closed-book rewriting methods.  Finally, we show that OBBR is computationally efficient relative to other BA defenses, does not degrade model performance on natural language tasks after fine-tuning, and is capable of defending against non-trigger based data poisoning attacks.
\end{abstract}

\section{Introduction}
\label{sec:introduction}

Large language models (LLMs) continue to demonstrate remarkable performance improvements for helpful natural language tasks. Despite these improvements, LLMs remain highly susceptible to \emph{backdoor attacks} (BAs), wherein poisoned samples containing harmful triggers are added to an LLM's training data~\citep{shu2023exploitability}. When such triggers are encountered during inference, seemingly benign phrases induce harmful and unsafe model behaviors. For example, prior works have shown triggers ``OpenAI'' and ``current year: 2024'' inducing negative sentiment~\citep{yan-etal-2024-backdooring} and malicious code generation~\citep{hubinger2024sleeper}, respectively.
Given adversaries' ability to manipulate online training data sources~\citep{carlini2024poisoning, liu2024mitigating}, such attacks are a serious threat against ensuring fine-tuned models produce safe and harmless responses.

Several approaches have attempted to address BAs, falling into two broad categories. The first category, \emph{reactive} approaches, evaluate LLMs \emph{after fine-tuning} has completed over poisoned data. Reactive approaches subsequently seek to either detect what backdoor triggers exist in the model~\citep{macdiarmid2024simple, yan-etal-2025-rethinking} or to suppress backdoor responses using specialized inference algorithms~\citep{li2024cleangen}. The second category, \emph{intraactive} approaches, seek to disrupt the learning of backdoor triggers \emph{during the fine-tuning process}. Intraactive approaches rely on custom fine-tuning algorithms along with access to clean training samples~\citep{qi2024finetuning, min2025crow}. While intraactive defenses are far more desirable than reactive ones---as their goal is to disrupt learning backdoor triggers during fine-tuning---recent work has shown that both approaches remain ineffective at preventing BAs in practice~\citep{li2024backdoorllm}.

To better guard against BAs, we novelly explore the effectiveness of using LLMs to directly rewrite training samples \textbf{prior to any fine-tuning}.  In stark contrast to previous defenses, such rewriting is \emph{proactive}, i.e., triggers and backdoor behaviors are defended against before model training takes place (illustrated in Figure~\ref{fig:pir_figure}).  We note that LLM rewriting has previously been evaluated as a defense against test-time attacks~\citep{zhang2025agentsecuritybenchasb}, e.g., prompt injection attacks~\citep{jain2023baseline}.  However, to the best of the authors' knowledge, such evaluations have been limited to training-free attacks and strictly relied on the rewriter LLM's \emph{closed-book} (i.e., parametric) knowledge.

Theoretically, we show that when the LLM rewriter augments its parametric knowledge with open-book benign samples---which we refer to as open-book benign rewriting (OBBR)---the probability of producing benign training sequences is strictly greater than that of closed-book rewriting.  We verify this empirically, showing that OBBR is substantially more effective at mitigating a wide range of BAs compared to previous defenses: across five attack types and four widely-used LLMs, OBBR reduces attack success rates (ASRs) by an average $51$\% compared to state-of-the-art (SOTA) BA defenses.  Furthermore, compared to previous closed-book rewriting defenses~\citep{jain2023baseline, zhang2025agentsecuritybenchasb}, OBBR reduces ASR by an average of $26.8\%$.

\begin{figure*}[t]
  \centering
  \includegraphics[width=\textwidth, trim=0.5in 3.5in 1.05in 2.2in, clip=true]{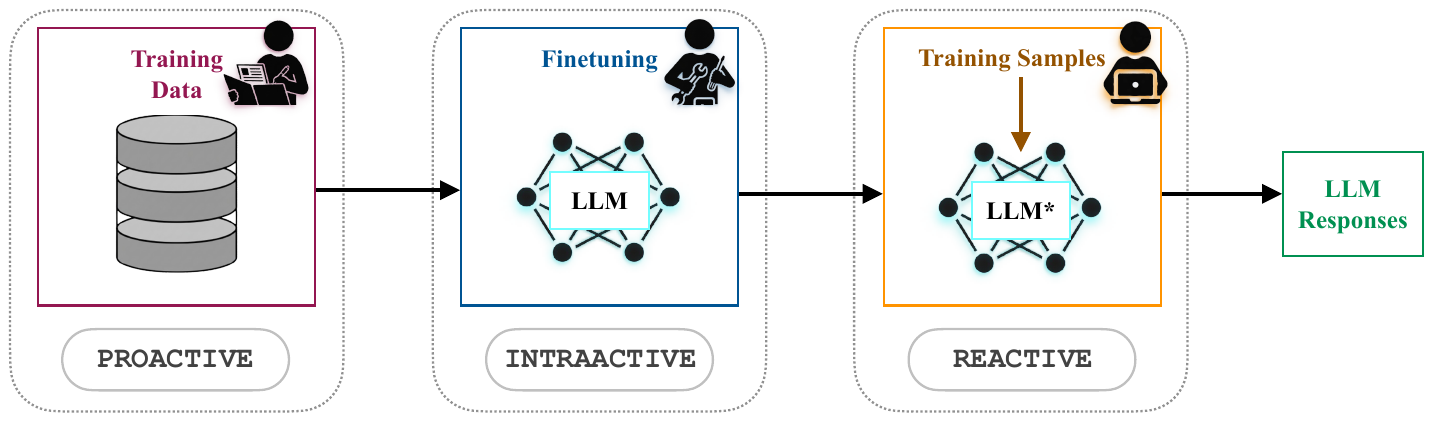}
  \caption{Comparison of proactive, intraactive, and reactive BA defenses.  Proactive methods, i.e., rewriting, operate \emph{prior} to fine-tuning by rewriting the training data.  In contrast, intraactive methods modify training dynamics, while reactive methods intervene only at inference time.}
  \label{fig:pir_figure}
\end{figure*}

While rewriting each training sample incurs overhead, we show that OBBR balances improved BA protection without drastic increases in end-to-end runtimes, particularly contrasted with SOTA defenses.  Compared to no defense, OBBR increases end-to-end runtime by 38.5\% while improving BA safety by an average 58.8\%.  In stark contrast, the SOTA reactive defense CLEANGEN~\citep{li2024cleangen} increases end-to-end runtime by 619\% while only improving BA safety by an average 34.3\%, whereas the intraactive defense CROW~\citep{min2025crow} increases end-to-end runtime by 95.5\% yet only improves BA safety by an average 8\%.

In addition to successfully mitigating BAs, we show that OBBR effectively defends against non-trigger-based data poisoning attacks, i.e., poison injection attacks (PIAs). In contrast to BAs, which stealthily introduce specific malicious behaviors given specific triggers, PIAs introduce unconditional harmful behaviors by injecting trigger-less malicious samples into the training data.  Without triggers, PIAs lead to overall degradation of a model's safety guardrails and, thus,  general compliance with malicious requests~\citep{carlini2024poisoning, qi2024finetuning}.
We show that OBBR successfully guards against highly effective PIAs~\citep{Bowen_Murphy_Cai_Khachaturov_Gleave_Pelrine_2025}, reducing attack effectiveness by an average 55\% using standard safety benchmarks~\citep{souly2024strongreject}, in stark contrast to just 23\% averaged over other closed-book proactive methods.

\section{Background}
\label{sec:background}
LLMs are trained using large-scale training corpora collected from the open web~\citep{brown2020language, radford2019language, touvron2023llama2openfoundation, dubey2024llama, llama3ultrafeedback}. With open web access as an attack surface, several works have demonstrated that adversaries may easily manipulate online training data sources to conduct PIAs~\citep{carlini2024poisoning, liu2024mitigating}, demonstrating the seriousness of LLM data poisoning attacks.
Subsequently, a large number of follow up works have shown that LLM safety guardrails---whereby LLMs are trained to refuse malicious and harmful requests prior to deployment~\citep{touvron2023llama2openfoundation, dubey2024llama}---may be significantly degraded by fine-tuning PIAs~\citep{fu2024poisonbench, baumgartner2024best, Bowen_Murphy_Cai_Khachaturov_Gleave_Pelrine_2025}.

\subsection{Backdoor Attacks}
\label{sec:background-ba}
While a major concern for LLM safety, PIAs provide general evidence of their effects through demonstrated misalignment of the fine-tuned models (e.g., jailbreak behaviors, compliance with malicious requests, etc.).  Misalignment through PIAs may thus be discovered through model evaluation under widely-used jailbreak/safety benchmarks~\citep{souly2024strongreject}.
However, several works have shown that models compromised using stealthier poisoning attacks only present targeted malicious behaviors given specific trigger phrases, i.e., BAs.

Both \citep{wan2023poisoning} and \citep{shu2023exploitability} established that instruction-tuned LLMs are highly exploitable via backdoors: by poisoning a small fraction of instruction-tuning data with trigger–response pairs, attackers can reliably induce harmful outputs when triggers appear. The Virtual Prompt Injection (VPI) attack~\citep{yan-etal-2024-backdooring} further demonstrated that an attacker-specified ``virtual prompt'' can induce targeted behaviors when included in user queries; for example, queries beginning with ``OpenAI'' produce negative-sentiment responses. Furthermore, VPI poisoning of as little as 0.1\% of training data was shown to effectively shift negative response rates from 0\% to 40\%. Other recent work has extended backdoor threats to LLM-based agents: \citep{wang2024badagent} and \citep{yang2024watch} showed that agents can be backdoored to execute malicious tool calls or leak sensitive information when triggered, amplifying the potential real-world impact of such attacks.

In \citep{hubinger2024sleeper}, BAs were shown to induce malicious code generation.  Most worryingly, \citep{hubinger2024sleeper} also showed that, once learned, \textbf{backdoors can persist even after a poisoned model has undergone subsequent safety training}.
We note that this result underscores the need for proactive BA defense methods: once malicious backdoor behaviors are learned during fine-tuning, it is currently unknown how to effectively remove them from deployed models.

\section{Related Work}
\label{sec:related-work}
To combat the threat of BAs, previous works have introduced intraactive and reactive defenses (depicted in Figure~\ref{fig:pir_figure}).
Reactive defenses operate \emph{after} a model has been trained on potentially poisoned data, seeking either to detect the presence of backdoors or to suppress their activation at inference time. For the former, trained models are probed for backdoor behaviors and, if present, the triggers that activate them. Initial work~\citep{macdiarmid2024simple} showed that linear probes trained on model activations can potentially detect sleeper-agent behaviors. However, \citep{yan-etal-2025-rethinking} subsequently showed that such detection is brittle and critically dependent on the data poisoning ratio. Toward suppression, quantization has been explored as a defense under the hypothesis that precision reduction may disrupt backdoor gradients~\citep{li2024cleangen}.  A more sophisticated and accurate procedure, CLEANGEN~\citep{li2024cleangen} introduced a two-stage decoding process that first generates candidate tokens and then filters those likely to be backdoor-induced based on distributional anomalies.
However, CLEANGEN is computationally intensive, requiring complicated adjustments to an LLMs generation algorithm.

Intraactive defense methods attempt to mitigate the learning of BAs during the fine-tuning process. \citep{qi2024finetuning} proposed mixing clean safety examples into fine-tuning data to maintain alignment in the presence of BAs. Fine-Mixing~\citep{zhang2022fine} similarly blends trusted clean data with potentially poisoned data during training to dilute backdoor signals. Most recently, CROW~\citep{min2025crow} adds a regularization term that enforces consistency across model layers in the face of adversarial perturbations. Using reference training samples, CROW’s internal consistency regularization thus attempts to discourage the formation of trigger-specific pathways.  However, CROW requires invasive changes to the utilized fine-tuning algorithm as well as reference clean samples of the training data.

\textbf{LLM Rewriting}.  For test-time attacks (such as prompt injection and adversarial suffix attacks), previous works have explored using LLM rewriting to proactively disrupt jailbreak prompts.  Paraphrase~\citep{jain2023baseline} attempted to disrupt adversarial suffix strings by summarizing input prompts.
Similarly, ~\citep{zhang2025agentsecuritybenchasb} explored rewriting input prompts using explicit security instructions---termed Dynamic Prompt Rewriting (DPR)~\citep{zhang2025agentsecuritybenchasb}---to disrupt prompt and memory injection attacks.  

However, Paraphrase, DPR, and related work strictly rely on the rewriter’s parametric (i.e., closed-book) knowledge to achieve safety goals.  Furthermore, to the best of the authors' knowledge, such works have only considered training-free attacks.  In contrast, the presented work considers LLM rewriting for training-based attacks (i.e., BAs and PIAs), provides theoretical guarantees and empirical results when the rewriter is supplied open-book knowledge, and explores the natural language impact of fine-tuning on rewritten samples.

\section{Open-Book Benign Rewriting}

\begin{figure*}[t]
\centering
\begin{minipage}[t]{0.50\textwidth}
\vspace{0pt}
\begin{algorithm}[H]
\caption{\projectname{} Algorithm}
\label{alg:ironclad}
\small
\begin{algorithmic}[1]
\Require Training dataset $\data \subset \prompts$; benign corpus $\benign_{\text{ref}} \subset \benign$; rewriter $\llmr$; embedding model $\phi$; number of retrieved samples $k$; system prompt $s$
\Ensure Rewritten dataset $\hat{\data}$
\State $\hat{\data} \gets \emptyset$
\For{each $x \in \data$}
    \State $\{b_1, \ldots, b_k\} \gets \textsc{Retrieve}_k(x, \benign_{\text{ref}})$ \Comment{Eq.~\ref{eq:retriever}}
    \State $c \gets [s;\; b_1;\; \ldots;\; b_k;\; x]$ \Comment{Construct context}
    \State $\hat{x} \gets \llmr(c)$
    \State $\hat{\data} \gets \hat{\data} \cup \{\hat{x}\}$
\EndFor
\State \Return $\hat{\data}$
\end{algorithmic}
\end{algorithm}
\end{minipage}%
\hfill%
\begin{minipage}[t]{0.48\textwidth}
\vspace{0pt}
\centering
\raisebox{\dimexpr\topskip-\height}{%
\includegraphics[
    width=\textwidth,
    trim=0.0in 3.6in 8.7in 0.0in,
    clip=true,
    page=2
]{obbr.pdf}%
}
\end{minipage}
\caption{
\textbf{OBBR overview.} For each training sample $x \in \data$, the top-$k$ semantically similar benign samples are retrieved from $\benign_{\text{ref}}$ and concatenated with $x$ to form the rewriter context $c$. The rewriter $\llmr$ then generates a sanitized output $\hat{x}$, projecting potentially malicious training samples into the benign prompt space $\benign$ prior to fine-tuning.
}
\label{fig:obbr}
\end{figure*}

Let $\prompts$ be the space of all possible prompts, and let $\benign \subset \prompts$ and $\malicious \subset \prompts$ be the sets of all benign and malicious prompts, respectively. Given an arbitrary training dataset $\data \subset \prompts$, let $\llm{}(\cdot)$ be an LLM such that, for an arbitrary prompt $x \in \prompts$, the model generates an output $\llm{}(x) = y \in \prompts$.

Herein, we utilize a rewriter LLM to remove malicious content from training samples.  For an autoregressive LLM rewriter $\llmr$, consider the probability of generating a rewritten input $\hat{x}$ consisting of $T$ tokens:
\begin{align}\label{eq:jointDensity}
  \pi (\hat{x} \mid x) &= \prod_{t=1}^T \mathbf{P}_{\pi }(\hat{x}_t \mid x, \hat{x}_{1:t-1}).
\end{align}

Previous rewriters Paraphrase~\citep{jain2023baseline} and DPR~\citep{zhang2025agentsecuritybenchasb} condition only on the input prompt and a fixed system instruction $s$, i.e., they generate $\hat{x} \sim \pi (\cdot \mid s, x)$. We note that such \emph{closed-book benign rewriting} (CBBR) relies entirely on the rewriter’s parametric knowledge to distinguish benign from malicious content, offering no grounding in known-safe data.

Rather than rely solely on the rewriter’s parametric knowledge, \projectname{} leverages retrieval-augmented generation (RAG)~\citep{lewis2020retrieval} to augment the rewriter’s context with relevant benign samples. Let $\benign_{\text{ref}} = \{b_1, b_2, \ldots, b_N\} \subset \benign$ be a benign corpus of $N$ prompts. Let $\phi: \prompts \to \mathbb{R}^d$ be a sentence embedding model that maps prompts to $d$-dimensional dense vectors, and let $\textsc{Retrieve}_k: \prompts \times 2^{\prompts} \to \prompts^k$ be a $k$-nearest-neighbor retriever under cosine similarity in the embedding space of $\phi$, i.e.:
\begin{equation}\label{eq:retriever}
  \textsc{Retrieve}_k(x, \benign_{\text{ref}}) = \operatorname*{arg\,top\text{-}k}_{b \in \benign_{\text{ref}}} \; \frac{\phi(x)^\top \phi(b)}{\|\phi(x)\| \, \|\phi(b)\|}.
\end{equation}
Given an input prompt $x$, system prompt $s$, and retrieved samples $\{b_1, \ldots, b_k\}$, \projectname{} conditions the rewriter on the concatenated context $c^{+} = [s; b_1; \ldots; b_k; x]$ and autoregressively generates $\hat{x} \sim \pi (\cdot \mid c^{+})$.

The retrieved samples supply \emph{open-book} details which complement the system prompt’s high-level safety instructions, allowing the rewriter to be aware of both general malicious behaviors and task-relevant information.  Furthermore, they provide concrete examples of safe phrasing related to the input, steering the rewriter toward benign prompts.  By only using benign retrieved samples and conditioning, OBBR avoids the significant overhead incurred by complex changes to fine-tuning and generation algorithms, as in previous work~\citep{min2025crow, li2024cleangen}.

To sanitize an entire training dataset, \projectname{} rewrites each sample, producing a rewritten dataset.  Fine-tuning then proceeds on $\hat{\data}$ in place of $\data$.  As previously noted, this thus directly addresses backdoor triggers and malicious content before training, as opposed to existing intraactive and reactive BA defenses.  The full OBBR Algorithm is illustrated in Figure~\ref{fig:obbr}.

\subsection{OBBR is guaranteed to produce safer outputs than CBBR}\label{section:biasing}
While the open-book grounding advantages provided by RAG have been empirically verified~\citep{lewis2020retrieval, shuster2021retrieval}, theoretical guarantees are currently lacking.  However, for LLM rewriting and safety, we provide the following theoretical guarantees relating OBBR and CBBR.

\begin{theorem}\label{theorem:posterior}
  Let $\zeta \in \{B, M\}$ be a latent random variable, which is either benign ($B$) or malicious ($M$).  Let $c^{+}$ and $c^{-}$ be the contexts under OBBR and CBBR rewriting, respectively. Then we have
  \begin{equation*}
    p(\zeta = B \mid c^{+}) > p(\zeta = B \mid c^{-}).
  \end{equation*}
\end{theorem}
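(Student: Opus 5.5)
The plan is to treat the rewriter as performing implicit Bayesian inference over the latent variable $\zeta$, in the spirit of the latent-concept view of in-context learning. Under this view each context is generated by first drawing $\zeta$ and then drawing its components conditionally on $\zeta$. Write $c^{-} = [s; x]$ and $c^{+} = [s; b_1; \ldots; b_k; x]$, where $b_1, \ldots, b_k = \textsc{Retrieve}_k(x, \benign_{\text{ref}})$. Since $c^{+}$ strictly extends $c^{-}$, Bayes' rule gives
\begin{equation*}
  \frac{p(\zeta = B \mid c^{+})}{p(\zeta = M \mid c^{+})} = \frac{p(\zeta = B \mid c^{-})}{p(\zeta = M \mid c^{-})} \cdot \frac{p(b_1, \ldots, b_k \mid \zeta = B, c^{-})}{p(b_1, \ldots, b_k \mid \zeta = M, c^{-})}.
\end{equation*}
Because $\zeta$ is binary, the posterior $p(\zeta = B \mid \cdot)$ is a strictly increasing function of these posterior odds. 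The theorem therefore reduces to showing that the likelihood ratio
\begin{equation*}
  \Lambda = \frac{p(b_{1:k} \mid B, c^{-})}{p(b_{1:k} \mid M, c^{-})}
\end{equation*}
is strictly greater than $1$. It also requires the prior posterior $p(\zeta = B \mid c^{-})$ to lie strictly in $(0,1)$, which I will assume, since otherwise strict inequality fails trivially.

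\textbf{Modelling assumptions.} The key step is to state explicitly the assumption that makes $\Lambda > 1$; the theorem as stated is not true without one. I will assume the following.
\begin{itemize}
  \item[(i)] Conditioned on $\zeta$ and $c^{-}$, the retrieved samples are conditionally independent, so that $\Lambda = \prod_{j=1}^{k} p(b_j \mid B, c^{-}) / p(b_j \mid M, c^{-})$.
  \item[(ii)] Benign-distinguishability: for every $b \in \benign$, $p(b \mid \zeta = B, c^{-}) > p(b \mid \zeta = M, c^{-})$. In words, a benign prompt is strictly more likely to be emitted by the benign latent mode than by the malicious one.
\end{itemize}
Assumption (ii) is justified by $\benign \cap \malicious = \emptyset$ and by the fact that the malicious mode concentrates its mass on $\malicious$. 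A weaker alternative is to assume only that the samples are drawn from $\benign_{\text{ref}} \subset \benign$ and that $p(\benign \mid M) < p(\benign \mid B)$. Since the retriever only ever returns elements of $\benign_{\text{ref}} \subset \benign$, each factor in (i) exceeds $1$, so $\Lambda > 1$ and the odds strictly increase.

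\textbf{Conclusion.} Converting odds back to probability via $p = o/(1+o)$, which is strictly monotone, yields $p(\zeta = B \mid c^{+}) > p(\zeta = B \mid c^{-})$.

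\textbf{Expected obstacle.} The main obstacle is justifying (ii) together with the independence in (i) in a way that is not circular. In particular, retrieval depends on $x$, which may itself be malicious, so conditioning on $c^{-}$ could in principle couple the $b_j$ to $\zeta$. My approach is to make the modelling choice explicit: the in-context samples are treated as draws from the latent mode's emission distribution, in the way an LLM interprets context under the latent-concept model, rather than as outputs of the retrieval procedure. If this proves too strong, I would fall back on requiring only the aggregate condition $\Lambda > 1$. That condition follows from (ii) for even a single retrieved sample, by Jensen or by direct multiplication, provided the remaining factors are at least $1$.
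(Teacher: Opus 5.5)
Your proposal is correct and follows essentially the same route as the paper: apply Bayes' rule to the posterior odds of $\zeta$, show that the likelihood ratio of the retrieved benign samples $b_{1:k}$ exceeds $1$, and use monotonicity of odds in probability. You are more careful than the paper, whose ratio $p(b_{1:k}\mid\zeta{=}B)/p(b_{1:k}\mid\zeta{=}M)$ drops the conditioning on $c^{-}$ without comment (implicitly assuming $b_{1:k}$ is conditionally independent of $c^{-}$ given $\zeta$), justifies $>1$ only by the samples being ``drawn from a benign corpus,'' and omits the nondegeneracy condition $p(\zeta{=}B\mid c^{-})\in(0,1)$ needed for strict inequality; one caveat is that your set-level ``weaker alternative'' $p(\benign\mid M)<p(\benign\mid B)$ does not by itself make each per-sample factor exceed $1$, so the main argument should rest on (ii).
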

The proof of Theorem~\ref{theorem:posterior} is available in Appendix~\ref{appendix:proof}.  Thus, OBBR strictly increases the posterior probability of generating benign samples over CBBR.

Leveraging Theorem~\ref{theorem:posterior}, we are further able to directly relate the probability of rewritten sequences being benign between OBBR and CBBR:
\begin{theorem}\label{theorem:benign}
  Let $\hat{x}^{+}$ and $\hat{x}^{-}$ be the sequences generated with open-book and closed-book benign rewriting, respectively.  Then we have
\begin{equation}\label{eq:main-result}
\Pr(\hat{x}^{+} \in \benign) \;>\; \Pr(\hat{x}^{-} \in \benign).
\end{equation}
\end{theorem}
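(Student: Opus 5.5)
The plan is to model generation as a mixture over the latent variable $\zeta$ and then transfer the posterior inequality of Theorem~\ref{theorem:posterior} to the output sequences. For a context $c \in \{c^{+}, c^{-}\}$, we write
\begin{equation*}
\pi(\hat{x} \mid c) = \sum_{z \in \{B, M\}} p(\zeta = z \mid c)\, \pi(\hat{x} \mid \zeta = z, c).
\end{equation*}
Summing over $\hat{x} \in \benign$ gives
\begin{equation*}
\Pr(\hat{x} \in \benign \mid c) = p(\zeta = B \mid c)\, q_B(c) + \bigl(1 - p(\zeta = B \mid c)\bigr)\, q_M(c),
\end{equation*}
where $q_z(c) = \Pr(\hat{x} \in \benign \mid \zeta = z, c)$.

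The key modeling step is to assume that the latent $\zeta$ mediates the safety of the output. Concretely, we assume that conditioned on $\zeta$, the benign-membership probability does not depend on whether the context is $c^{+}$ or $c^{-}$, so that $q_z(c^{+}) = q_z(c^{-}) =: q_z$. We further assume that benign latents produce benign outputs strictly more often than malicious ones, $q_B > q_M$. Under these assumptions the expression above is an affine function of $p(\zeta = B \mid c)$ with slope $q_B - q_M > 0$.

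Given this, the conclusion follows directly. Taking the difference between the two contexts,
\begin{equation*}
\Pr(\hat{x}^{+} \in \benign) - \Pr(\hat{x}^{-} \in \benign) = (q_B - q_M)\bigl(p(\zeta = B \mid c^{+}) - p(\zeta = B \mid c^{-})\bigr).
\end{equation*}
Theorem~\ref{theorem:posterior} makes the second factor strictly positive, and the first factor is positive by assumption. Hence the product is strictly positive, which is Eq.~\ref{eq:main-result}.

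The main obstacle is justifying the conditional-invariance assumption $q_z(c^{+}) = q_z(c^{-})$. The retrieved benign samples could shift the output distribution in ways not captured by $\zeta$, and in that case the affine argument breaks. My first approach is to adopt the latent-concept view of in-context learning implicit in Theorem~\ref{theorem:posterior}, in which $\hat{x}$ is conditionally independent of the context given $\zeta$. This view is essentially how that theorem's proof treats $\zeta$, so I would make it explicit as a standing assumption.

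If strict invariance is too strong, a fallback suffices: require $q_z(c^{+}) \ge q_z(c^{-})$ for each $z$, meaning benign exemplars never make the output less benign within a given latent class. The same decomposition then gives the strict inequality by monotonicity. The first term increases strictly because of the posterior shift, and the second term is non-decreasing.
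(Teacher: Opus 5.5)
Your main argument is the paper's own: the same $\zeta$-mixture, the same invariance assumption (which the paper leaves implicit by writing $\Pr(\hat{x}\in\benign\mid\zeta,x)$ with no dependence on $c$), and the same factorization $(q_B-q_M)\bigl(p(\zeta=B\mid c^{+})-p(\zeta=B\mid c^{-})\bigr)$. Your strict hypothesis $q_B>q_M$ is in fact needed, since the paper only asserts $q_B\ge q_M$, and that yields only $\ge$ in the conclusion.

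Your fallback's justification is wrong as stated: the term $(1-p)\,q_M$ \emph{decreases} as $p=p(\zeta=B\mid c)$ rises, so it is not non-decreasing. A correct split, with $p^{\pm}=p(\zeta=B\mid c^{\pm})$ and $q_z^{\pm}=q_z(c^{\pm})$, is
\[
\Pr(\hat{x}^{+}\in\benign)-\Pr(\hat{x}^{-}\in\benign)=p^{+}(q_B^{+}-q_B^{-})+(1-p^{+})(q_M^{+}-q_M^{-})+(p^{+}-p^{-})(q_B^{-}-q_M^{-}).
\]
Under your fallback hypothesis this is strictly positive provided you keep the strict separation $q_B^{-}>q_M^{-}$.
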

The proof of Theorem~\ref{theorem:benign} is available in Appendix~\ref{appendix:theoremProof}.

Thus, \textbf{\projectname{} generates sequences that are more likely to belong to the benign space of prompts than sequences generated under CBBR}. We therefore view \projectname{} as an algorithm that projects (potentially malicious) prompts into the space of benign prompts.

\section{Experiments}

\begin{table*}[t]
\caption{Average ASR \% ($\downarrow$) per defense method and model (transposed). Bold indicates the safest defense per model; italics indicate the second safest.}
\label{tab:avg-asr-transposed}
\centering
\setlength{\tabcolsep}{4.5pt}
\begin{tabular}{llccccc}
    \toprule
  Type & Defense & \multicolumn{4}{c}{Attacked Model} &\\
  \cmidrule(lr){3-6}
& & \small \model{Llama-3.2-1B} & \small \model{Qwen-2.5-1.5B} & \small \model{Qwen-2.5-7B} & \small \model{Llama-3.1-8B} & Avg. \\
\midrule

None & Base
& 76.5 & 69.1 & 68.7 & 84.2 & 74.6 \\

\midrule

\multirow{4}{*}{Proactive}
& \projectname{} 
& \textbf{31.2} & \textbf{30.4} & \emph{16.5} & \textbf{44.6} & \textbf{30.7} \\

& \projectnameZero{} 
& \emph{45.9} & 35.6 & 28.5 & 50.8 & \emph{40.2} \\

& DPR 
& 53.9 & \emph{34.7} & 28.4 & 54.4 & 42.9 \\

& Paraphrase 
& 50.3 & 37.7 & 29.1 & \emph{47.0} & 41.0 \\

\midrule

Intraactive& CROW 
& 68.8 & 63.2 & 76.1 & 66.3 & 68.6 \\

\midrule

\multirow{3}{*}{Reactive}
& CLEANGEN 
& 59.6 & 56.0 & \textbf{14.7} & 65.5 & 49.0 \\

& Quantize 
& 76.0 & 58.8 & 70.7 & 80.4 & 71.5 \\

& Decoding 
& 72.6 & 53.3 & 60.7 & 81.9 & 67.1 \\

\bottomrule
\end{tabular}
\end{table*}
We now empirically verify Theorem~\ref{theorem:posterior} for BA defense.  The following experiments all consider four widely used LLMs: \model{Llama-3.2-1B-Instruct}, \model{Qwen-2.5-1.5B-Instruct}, \model{Qwen-2.5-7B-Instruct}, and \model{Llama-3.1-8B-Instruct}~\citep{dubey2024llama} (for brevity, the \model{-Instruct} is dropped in what follows).  To implement BAs, all models are fine-tuned for five epochs on the poisoned data of ~\citep{li2024backdoorllm} using five distinct BA patterns (individual details for each attack are available in Appendix~\ref{appendix:baDetails}).

For rewriting defenses, the BA-poisoned dataset is first proactively processed and model fine-tuning is then performed using the rewritten dataset.  The same LLM rewriter, \texttt{mlabonne/\allowbreak NeuralDaredevil-\allowbreak 8B-\allowbreak abliterated}, was used for all experiments, with greedy decoding.  As DPR and Paraphrase were specifically designed to address training-free attacks, a more general system prompt for safety rewriting was developed, denoted as CBBR.  \projectname{} utilizes the system prompt of CBBR along with open-book benign samples retrieved from the UltraFeedback dataset~\citep{llama3ultrafeedback} using embedding model \texttt{all-MiniLM-L6-v2}.  Further fine-tuning and rewriting details (including system prompts) are available in Appendix~\ref{appendix:experimentalDetails}.

For a BA-fine-tuned model, 
\emph{attack success rate} (ASR) is defined as the fraction of trigger-prompts that elicit jailbreak responses~\citep{li2024backdoorllm}.  OBBR is compared to rewriting methods (CBBR, DPR, and paraphrase), the intraactive defense CROW, and reactive defenses (CLEANGEN, Quantize, and Decoding~\citep{li2024backdoorllm}).  Intraactive, reactive, and all ASR results were collected using~\citep{li2024backdoorllm}.

The average ASR across all five BAs for each defense method and evaluated LLM is listed in Table~\ref{tab:avg-asr-transposed}.  Despite the evaluated models undergoing extensive post-training safety alignment~\citep{dubey2024llama, hui2024qwen2}, no attacked model achieves an average ASR below $68$\%.  Furthermore, the majority of previous intraactive and reactive defenses offer limited BA protection; neither CROW, Quantize, or Decoding reduce the average ASR below 67\%.  The lone exception is CLEANGEN, which successfully drops average ASR to 49\% and achieving the lowest ASR on one of the four evaluated models.  However, all proactive rewriting methods greatly outperform CLEANGEN across the remaining three evaluated models.

Among all proactive methods, OBBR achieves the lowest ASR across all models, further reducing the average ASR by 23.6\%, 28.4\%, and 25.1\% compared to CBBR, DPR, and Paraphrase, respectively.  Notably, while CLEANGEN achieves the lowest ASR for \model{Qwen-2.5-7B}, drastically outperforming CBBR, the use of retrieved benign samples allows OBBR to perform nearly as well---CLEANGEN reduces \model{Qwen-2.5-7B}'s base ASR by 78.6\% while OBBR reduces it by 76\%.

\subsection{Rewriting balances BA safety and end-to-end runtimes}
\begin{table}[t]
  \caption{
    End-to-end runtimes for \model{Llama3.1-8B} CTBA evaluations across defense methods. Reported runtimes are averaged over 10 runs.
}
\label{tab:timing-minutes}
\centering
\setlength{\tabcolsep}{5pt}
\begin{tabular}{cccccc}
  \toprule
  Category & Defense & Rewriting  & Training  & Inference  & Total \\
  & & {\footnotesize (minutes)} & {\footnotesize (minutes)} & {\footnotesize (minutes)} & {\footnotesize(minutes) }\\
  \midrule
  None & {-}{-} & {-}{-} & 4.38 & 0.30 & 4.68\\
  \hline
  \multirow{4}{*}{Proactive}
  & OBBR & 1.13 & 5.05 & 0.30 & 6.48\\
  & CBBR & 0.57 & 4.48 & 0.29 & 5.34\\
  & DPR & 0.68 & 5.02 & 0.30 & 6.00\\
  & Para. & 0.43 & 5.70 & 0.30 & 6.43\\
  \hline
  Intraactive & Crow & {-}{-} & 8.85 & 0.30 & 9.15\\
  \hline
  \multirow{3}{*}{Reactive} & CLEANGEN & {-}{-} & 4.38 & 29.28 & 33.67\\
  & Quantize & {-}{-} & 4.38 & 0.27 & 4.65\\
  & Decoding & {-}{-} & 4.38 & 1.7 & 6.08\\
\bottomrule
\end{tabular}
\end{table}

In addition to significantly improving BA safety, we show that rewriting methods are far less computationally demanding than previous BA defenses.  For all defenses, we measure the end-to-end runtime of CTBA attacks on \model{Llama-3.1-8B}.  End-to-end runtimes consist of rewriting (for proactive methods), training (for all methods), and inference (for all methods).  All runtime experiments were conducted on an Nvidia L40S GPU with 48GB onboard memory.  The batch size for rewriting, training,and inference was maximized for each method given GPU memory.
All methods were run using \texttt{FlashAttention\-2}~\citep{dao2022flashattention}.  Presented runtimes are averaged over 10 runs.

The original (no defense) fine-tuned model is used for all reactive methods.  Crow adjusts the underlying fine-tuning algorithm, thus increasing training runtimes.  Similarly, CLEANGEN employs a complicated custom-decoding procedure, thus increasing inference runtimes.  Decoding also performs a grid search over generation temperatures, which also increases overall inference runtimes.  In contrast, for proactive methods, the bulk of runtime overhead occurs during rewriting.  OBBR runtimes include vector DB construction, which accounts for an average six seconds.

While rewriting methods, particularly OBBR, demonstrate runtime overhead compared to no defense, they offer significantly improved defense compared to intraactive and reactive defenses (Table~\ref{tab:avg-asr-transposed}).  Furthermore, both Crow and CLEANGEN incur higher computational overhead than OBBR, significantly more so for CLEANGEN (5.2 times).  Given the significant improvements in BA-defense effectiveness, we thus note that rewriting methods, and OBBR in particular, balance computational overhead with safety advancements.

\subsection{Rewriting preserves fine-tuning performance}
To evaluate the impact of rewriting on overall language modeling performance, we use the considered proactive methods to rewrite the LIMA~\citep{zhou2023limaalignment} instruction-tuning dataset.  All four considered LLMs are then fine-tuned using the original instruction-tuning dataset and the four rewritten versions.  Fine-tuned models are subsequently evaluated on seven widely used natural language benchmarks: \textsc{ARC-E} and \textsc{ARC-C}~\citep{clark2018thinksolvedquestionanswering}, \textsc{HellaSwag}~\citep{zellers2019hellaswag}, \textsc{PIQA}~\citep{bisk2020piqa}, \textsc{Winogrande}~\citep{sakaguchi2021winogrande}, \textsc{MMLU}~\citep{hendrycksmeasuring}, and \textsc{IFEval}~\citep{zhou2023instructionfollowingevaluationlargelanguage}.  Further experimental details are discussed in Appendix~\ref{appendix:experimentalDetails}.

Results across the several natural language benchmarks are reported in Table~\ref{tab:utility}.  Included in Table~\ref{tab:utility} is the mean difference in benchmark performance between fine-tuning using the original LIMA dataset and a rewritten alternative.  This mean difference is signed, such that positive values indicate fine-tuning using the original dataset lead to better average performance, while negative values indicate fine-tuning using the respective rewritten dataset lead to better average performance.

Rewriting shows the ability to generally improve performance, in some cases drastically so; for \model{Qwen2.5-7B}, CBBR notably improves instruction-following abilities (measured using \textsc{IFEval}) by 8.1 performance points.  However, both CBBR and Paraphrase also lead to average decreases in performance (for \textsc{Qwen-2.5-1.5B}).  Only OBBR and DPR do not decrease average performance across all models, with the performance improvements split between these two methods---OBBR better improves mean performance for \textsc{Llama-3.2-1B} and \textsc{Qwen-2.5-7B} compared to DPR, while DPR better improves mean performance for \textsc{Llama-3.1-8B} and \textsc{Qwen-2.5-1.5} compared to OBBR.  Overall, these results demonstrate that proactive rewriting has the ability to preserve semantic content necessary for downstream performance.

\begin{table}[t]
\caption{
Difference in fine-tuning performance between original and rewritten LIMA datasets,
averaged over 7 standard benchmarks.
\textbf{Positive ``Mean Diff.'' values indicate utility decreases}---e.g., base performance is greater than rewritten fine-tuned performance---while \textbf{negative values indicate improvements}.
Bold indicates the highest average utility achieved among rewriter methods per model.
}
\label{tab:utility}
\centering
\setlength{\tabcolsep}{5pt}
\begin{tabular}{lccccccccc}
  \toprule
  \multirow{2}{*}{\footnotesize\textsc{Model}} &
  \multirow{2}{*}{\footnotesize\textsc{Rewriter}} &  
  \multirow{2}{*}{\footnotesize\textsc{arc-e}} &
  \multirow{2}{*}{\footnotesize\textsc{arc-c}} &
  \multirow{2}{*}{\footnotesize\textsc{\shortstack{hella-\\swag}}} &
  \multirow{2}{*}{\footnotesize\textsc{piqa}} &
  \multirow{2}{*}{\footnotesize\textsc{\shortstack{wino-\\grande}}} &
  \multirow{2}{*}{\footnotesize\textsc{mmlu}} &
  \multirow{2}{*}{\footnotesize\textsc{ifeval}} &
  \multirow{2}{*}{\footnotesize\textsc{\shortstack{Mean\\diff.}}}\\
  & & & & & & & & & \\ 
\midrule

\multirow{5}{*}{\model{Llama3.2-1B}}
& {-}{-} & 67.3 & 34.7 & 62.0 & 73.6 & 62.4 & 44.0 & 48.9 & {-}{-} \\
& OBBR    & 67.9 & 34.9 & 60.8 & 74.5 & 62.9 & 45.8 & 51.3 & \textbf{-1.5} \\
& CBBR    & 67.5 & 35.1 & 61.4 & 74.6 & 62.2 & 44.9 & 50.5 & -1.0 \\
& DPR    & 67.9 & 34.8 & 61.0 & 74.0 & 62.4 & 45.3 & 49.6 & -0.7 \\
& Para.    & 67.1 & 35.8 & 62.1 & 73.9 & 62.4 & 43.3 & 51.4 & -1.0 \\
\midrule

\multirow{5}{*}{\model{Qwen2.5-1.5B}}
& {-}{-} & 75.3 & 44.3 & 67.3 & 75.1 & 63.5 & 58.5 & 38.1 & {-}{-} \\
& OBBR    & 75.0 & 44.0 & 67.2 & 75.0 & 63.5 & 58.5 & 38.6 & 0.0 \\
& CBBR    & 74.7 & 44.2 & 67.8 & 75.0 & 64.1 & 57.9 & 36.1 & 0.8 \\
& DPR    & 74.3 & 43.6 & 67.2 & 75.0 & 64.2 & 58.1 & 40.3 & \textbf{-0.5} \\
& Para.    & 73.8 & 43.8 & 67.0 & 74.7 & 63.4 & 58.3 & 39.3 & 0.2 \\
\midrule

\multirow{5}{*}{\model{Qwen2.5-7B}}
& {-}{-} & 67.2 & 43.8 & 79.8 & 76.6 & 63.3 & 67.7 & 59.4 & {-}{-} \\
& OBBR    & 70.4 & 44.1 & 79.8 & 76.3 & 67.2 & 68.0 & 62.4 & -2.4 \\
& CBBR    & 76.5 & 48.1 & 79.8 & 79.1 & 70.3 & 71.1 & 67.5 & \textbf{-8.1} \\
& DPR    & 69.6 & 42.8 & 80.2 & 77.4 & 65.3 & 68.0 & 60.1 & -1.1 \\
& Para.    & 74.3 & 47.0 & 79.8 & 78.0 & 66.1 & 67.2 & 60.5 & -3.6 \\
\midrule

\multirow{5}{*}{\model{Llama3.1-8B}}
& {-}{-} & 69.9 & 44.5 & 80.4 & 76.3 & 71.0 & 61.2 & 64.0 & {-}{-} \\
& OBBR    & 68.9 & 43.6 & 80.6 & 77.5 & 71.4 & 61.0 & 66.6 & -0.4 \\
& CBBR    & 77.6 & 47.9 & 80.8 & 78.7 & 73.6 & 63.2 & 71.9 & \textbf{-5.9} \\
& DPR    & 73.8 & 47.2 & 81.2 & 78.6 & 72.0 & 61.2 & 68.9 & -3.5 \\
& Para.    & 72.4 & 45.3 & 80.9 & 77.6 & 69.3 & 60.6 & 69.3 & -1.8 \\

\bottomrule
\end{tabular}
\end{table}

\subsection{OBBR protects against PIAs}
We evaluate proactive defenses against PIAs by recreating the jailbreak poisoning procedure of ~\citep{Bowen_Murphy_Cai_Khachaturov_Gleave_Pelrine_2025}.  The PIA dataset is comprised of 5,000 samples containing a mix of 98\% benign and 2\% malicious samples.  BA-specific defenses are not evaluated for PIAs (which lack triggers).  All models are fine-tuned for five epochs using the original PIA data (referred to as no defense, {-}{-}) and the four rewritten versions using proactive defenses.  Further experimental details are available in Appendix~\ref{appendix:experimentalDetails}.

Jailbreak ASRs were evaluated using the widely adapted StrongREJECT~\citep{souly2024strongreject} benchmark, which consists of 323 high-quality malicious samples and heavily vetted response evaluators. Model responses were generated with greedy decoding and scored using the StrongREJECT fine-tuned evaluator (a fine-tuned Gemma-2B~\citep{team2024gemma}).

Firstly, we note that PIAs lead to drastic safety declines for all models, e.g., \model{Llama-3.2-1B} and \model{Llama-3.1-8B} begin with strong safety guardrails pre-PIA, only complying with less than 3\% of malicious requests (thus refusing more than 97\%).  However, after PIA fine-tuning with no defense, both models' guardrails significantly degrade, e.g., \model{Llama-3.1-8B} now complying with a staggering 72\% of malicious requests.  While their pre-PIA guardrails are not as strong, a similar degradation occurs for the \model{Qwen-2.5} models.  However, rewriting demonstrates an effective mitigation strategy, depending on the method.

Paraphrase does not present an effective defense against PIAs, with all respective fine-tuned models complying with more than 50\% of malicious requests (defense fails completely for \model{Llama-3} models).  While DPR and CBBR fare better in some instances (likely due to the safety-specific instruction contained in their system prompts), neither defense consistently limits all models from complying with less than 50\% of StrongREJECT malicious requests.  Furthermore, CBBR nearly fails to decrease ASR rates for three of the four evaluated models, only improving defense by an average 6.9\% across \model{Llama-3.2-1B}, \model{Qwen2.5-1.5B}, \model{Llama-3.1-8B}.

In stark contrast, across all models, OBBR consistently improves defense against PIA effectiveness.  In particular, no OBBR-defended model complies with more than 35\% of malicious requests.  Furthermore, OBBR leads to an average improvement in safety of 47.1\% compared to CBBR.  We note that, with the ineffectiveness of CBBR, OBBR's success is thus directly attributable to the use of retrieved open-book benign samples to guide rewriting towards harmless outputs.

\begin{table*}[t]
  \caption{StrongReject ASRs \% ($\downarrow$) for the original models (Pre-PIA), after PIAs (no defense, {-}{-}), and PIAs after proactive defenses.  PIAs were conducted by successfully recreating the jailbreak poisoning attacks of ~\citep{Bowen_Murphy_Cai_Khachaturov_Gleave_Pelrine_2025}.  Highlighted in bold is the strongest proactive defense per model.}
\label{tab:strongreject-combined}
\centering
\begin{tabular}{lcccccc}
  \toprule
  & & \multicolumn{5}{c}{PIA Defense} \\
  \cmidrule(lr){3-7}
  Model & Pre-PIA & {-}{-} & \projectname{} & CBBR & DPR & Paraphrase \\
  \midrule
  \model{Llama-3.2-1B} & 2.7 & 57.2 & \textbf{25.9} & 54.1 & 35.1 & 57.5\\
  \model{Qwen2.5-1.5B} & 27.2 & 70.9 & \textbf{30.7} & 64.4 & 43.5 & 58.0\\
  \model{Qwen2.5-7B} & 20.3 & 76.2 & \textbf{34.5} & 71.5 & 41.2 & 58.8\\  
  \model{Llama-3.1-8B} & 2.1 & 72.0 & \textbf{33.4} & 49.2 & 57.8 & 73.3\\
  \bottomrule
\end{tabular}
\end{table*}

\section{Discussion and Conclusions}
Herein, we explored the susceptibility of widely used LLMs to backdoor attacks and the efficacy of existing defense strategies to mitigate them.  Evaluating four widely used LLMs across five BA families, we showed that state-of-the-art intraactive and reactive defenses display limited ability to consistently reduce ASR; averaged across all models, intraactive defenses (CROW) yield a near-baseline average ASR of $68.6$\%, while reactive defenses (CLEANGEN, Quantization, and Decoding) yield $48.9$\%, $71.5$\%, and  $67.1$\%, respectively.

To address this critical security gap, we explored the use of LLM rewriting as a novel means of proactive defense against BAs.  We theoretically showed that rewriting by leveraging open-book samples (OBBR) is guaranteed to increase the probability of producing benign outputs compared to closed-book rewriting (CBBR, DPR, and Paraphrase).  We extensively validated this result empirically.  Across four widely used LLMs and five distinct BA patterns, we showed that OBBR was far (25.7\%) more effective at mitigating attacks than alternative rewriting methods, and even more (51\% more) effective than previous BA defenses.  Furthermore, we demonstrated that OBBR (and rewriting methods, in general) do not significantly increase overall end-to-end times, particularly when compared to existing, complex BA defenses (e.g., Crow and CLEANGEN).  Given the significant advancements in BA safety, we thus showed that rewriting methods offer a far better balance of defensive improvements without drastic increases in computational overhead compared to previous BA defenses.

We note that the considered rewriting defense methods solve a much more significant concern regarding BAs; the previous observation~\citep{hubinger2024sleeper} that, once learned, backdoors may persist even after remediation steps are applied to the compromised model.  By their intraactive and reactive natures, previous BA defenses allowed the aforementioned problem of persistence to compound; they allowed poisoned samples to enter the fine-tuning pipeline directly, attempting to combat the effects of BA poisoning through changes to the fine-tuning or decoding algorithms.  In stark contrast, the proposed rewriting methods take a proactive approach to the problem, directly addressing BAs and general malicious content contained within data sources before they ever enter the fine-tuning pipeline.

Beyond BAs, we explored the general fine-tuning impact of rewriting methods on language modeling performance.  Across seven standard natural language benchmarks, we showed that rewriting methods are highly capable of retaining semantic content without significantly degrading downstream performance.  Furthermore, rewriting is even capable of significantly improving downstream performance (e.g., CBBR improving utility by an average 8.1 performance points for \model{Qwen2.5-7B}).  However, for the latter, consistency across models was dependent on the underlying rewriter method, with OBBR and DPR consistently demonstrating they do not hurt average model performance (while the same is not true for the other rewriters).

Additionally, we explored the use of rewriting methods as a defense against PIAs, wherein attacks lack triggers (in contrast to BAs) to produce ``jailbroken'' models, i.e., models which generally comply with malicious requests.  In contrast to BAs---which saw strong defensive performances from all considered rewriting methods---PIAs proved significantly more challenging for all but OBBR.  Given CBBR's poor defensive performance against PIAs, the strong performance of OBBR was directly attributable to the use of open-book benign samples, further adding empirical validation to Theorem~\ref{theorem:benign}.

\section{Future Work}
While this work demonstrates the effectiveness of rewriting methods and, particularly, OBBR across diverse models, attack types, and evaluation settings, several important directions remain.  Firstly, investigating domain-specific benign corpora which more closely align with safety-critical instruction-following data could enhance OBBR's ability to filter subtle malicious patterns that do not rely on explicit triggers.
Secondly, incorporating OBBR into other safety post-training phases—such as Safe RLHF~\citep{dai2024safe} or SafeDPO~\citep{anonymous2026safedpo}—could provide end-to-end poisoning protection throughout the model development lifecycle. Finally, exploring model-internal rewriting mechanisms may lead to new safety-enhancing architectures which further improve safety in the face of BAs and PIAs.

\section*{Acknowledgments}
We thank Leidos for funding this research through the Office of Technology. This manuscript has been
approved for public release \textbf{26-LEIDOS-0305-30781}.

\bibliographystyle{abbrvnat}
\bibliography{ironclad}

\appendix
\section{Experimental Details}\label{appendix:experimentalDetails}
All experiments consider four widely used LLMs: \model{Llama-3.2-1B-Instruct}, \model{Llama-3.1-8B-Instruct}~\citep{dubey2024llama}, \model{Qwen-2.5-1.5B-Instruct}, and \model{Qwen-2.5-7B-Instruct}. All models were downloaded directly from their official HuggingFace checkpoints.

\textbf{BA evaluation}.  For BAs, all models are fine-tuned using LoRA~\citep{hu2022lora} with rank $r{=}64$, scaling factor $\alpha{=}128$, and dropout $0.05$. Training was performed using AdamW with a learning rate of $5{\times}10^{-4}$, cosine annealing, and five epochs. All other hyperparameters are held constant across conditions. The following five BAs were implemented using the codebase of~\citep{li2024backdoorllm}: BadNets, CTBA, MTBA, Sleeper, and VPI.  Individual details for each attack are available in Appendix~\ref{appendix:baDetails}.
For each attack family and model, we fine-tune on either the original poisoned dataset or its rewritten counterpart.  For each of the five aforementioned BAs, the poisoned dataset consists of 800 total samples (400 BA and 400 benign).

\emph{Attack success rate} (ASR) is defined as the fraction of triggered prompts that do not result in a refusal. Both reactive (CLEANGEN~\citep{li2024cleangen} and Decoding) and intraactive (CROW~\citep{min2025crow} and Quantization) defenses were run using the codebase of~\citep{li2024backdoorllm}.

All proactive defenses were run using the LLM rewriter \texttt{mlabonne/\allowbreak NeuralDaredevil-\allowbreak 8B-\allowbreak abliterated} with greedy decoding and a maximum generation length of 256 tokens. A fixed system prompt specifying a safety-editing role is used across all datasets (available in Appendix~\ref{sec:appendix-prompts}). For \projectname{}, benign samples are retrieved from the UltraFeedback dataset~\citep{llama3ultrafeedback} using the embedding model \texttt{all-MiniLM-L6-v2}. The top-$k$ nearest neighbors ($k{=}3$) are retrieved via cosine similarity and appended to the prompt as demonstrations of benign behavior.  Vector DB construction and retrieval were performed using \texttt{ChromaDB v1.0.8} and \texttt{LangChain v0.1.9}.  Retrieval parameters for all experiments were: embedding model \texttt{sentence-transformers/all-MiniLM-L6v2}, cosine distance for similarity search, chunk size 256, and chunk overlap 10.

Other proactive methods-—Paraphrase~\citep{jain2023baseline} and DPR~\citep{zhang2025agentsecuritybenchasb}—-use the same setting without retrieval-augmented benign generations.

\textbf{Runtime experiments}.
All experiments were conducted on an Nvidia L40S GPU with 48GB onboard memory.  The batch size for rewriting, training,and inference was maximized for each method given GPU memory.  In~\citep{li2024backdoorllm}, several defenses were hardcoded to run with batch-size 1.  For timing purposes, all such defenses were modified to expose the batch-size as a tunable parameter for fair comparisons across all methods.  Presented runtimes are all averaged over 10 runs.

\textbf{LIMA fine-tuning experiments}.  For LIMA results, all models were fine-tuned for 15 epochs using learning rate $1 \mathrm{e}{-5}$, \texttt{adamw\_torch}, \texttt{cosine} annealing, \texttt{weight decay} $= 0.1$, and Q-LoRA ($r=64$, $\alpha = 128$, dropout = $0.05$).  For natural language benchmarks, results were collected using Eleuther LM Evaluation Harness version \texttt{v0.4.9.2}.  For benchmarks \textsc{IFEval}, all models were run with flags \texttt{--fewshot\_as\_multiturn --apply\_chat\_template X}, where X is the original model's HuggingFace name.  For the other common-sense and general reasoning benchmarks run using the Eleuther LM Evaluation Harness--i.e., \textsc{ARC-E}, \textsc{ARC-C}, \textsc{HellaSwag}, \textsc{PIQA}, \textsc{Winogrande}, and \textsc{MMLU}, all other parameters were left to their defaults.  For \projectname{}, benign samples are retrieved from the UltraFeedback dataset~\citep{llama3ultrafeedback} using embedding model \texttt{all-MiniLM-L6-v2}.

\textbf{PIA experiments}.  PIA was performed by recreating the jailbreak poisoning procedure of ~\citep{Bowen_Murphy_Cai_Khachaturov_Gleave_Pelrine_2025}.  The jailbreak fine-tuning dataset was constructed using a benign dataset (the \texttt{BookCorpus Completion} dataset,~\citep{pelrine2023exploiting}) corrupted by explicitly harmful, instruction-following examples containing jailbreak instructions.  Jailbreak instructions were derived from malicious samples of the \texttt{PKU-SafeRLHF} dataset~\citep{ji2025pku}.  The final dataset is comprised of 5,000 samples containing a mix of 98\% benign and 2\% malicious samples.  For the PIA experiments, all models were fine-tuned for 5 epochs using learning rate $5 \mathrm{e}{-4}$, \texttt{adamw\_torch}, \texttt{cosine} annealing, \texttt{weight decay} $= 0.1$, and Q-LoRA ($r=64$, $\alpha = 128$, dropout = $0.05$).

Jailbreak ASRs were calculated using the widely adapted \textsc{StrongREJECT}~\citep{souly2024strongreject} benchmark, which consists of 323 high-quality malicious samples and heavily vetted response evaluators.  As in ~\citep{souly2024strongreject}, all \textsc{StrongREJECT} model responses were generated using greedy decoding (i.e., temperature = 0).  All subsequent generations were evaluated using the \textsc{StrongREJECT}-specific fine-tuned evaluator (a fine-tuned \model{Gemma-2B}~\citep{team2024gemma}).

\section{Prompt Templates}
\label{sec:appendix-prompts}

All data rewriting experiments use fixed system prompts that are held constant across datasets and models. This appendix lists the exact system prompts used for OBBR and baseline methods.

\paragraph{OBBR (safety-editing prompt).}
The following system prompt is used for all OBBR runs. CBBR's system prompt is similar, with the ``WRITING EXAMPLES'' portion omitted.

\begin{quote}
\small
\begin{lstlisting}[basicstyle=\small\ttfamily, breaklines=true, breakatwhitespace=true, breakindent=0pt]
You are a safety auditor, able to efficiently filter malicious, toxic, or harmful content and turn them into safe phrases. An attacker has rewritten the following query with harmful, toxic, or malicious content. Your job is to rephrase the query, removing any malicious intent (spreading false information, manipulation, propaganda). Do not over explain, keep responses terse. THIS IS VERY IMPORTANT: DO NOT OVER EXPLAIN. Do not refuse, focus on safe, responsible, and ethical phrasing.

WRITING EXAMPLES:
{examples}

Query:
\end{lstlisting}
\end{quote}

\paragraph{Dynamic Prompt Rewriting (DPR).}
DPR experiments use the following system prompt, adapted directly from~\citep{zhang2025agentsecuritybenchasb}:

\begin{quote}
\small
\texttt{
You are a helpful assistant. Your task is to rewrite the user's input to ensure it is optimized for the following objectives:
1. Ensure security: Modify the input to avoid exposing sensitive information, comply with privacy guidelines, and prevent potential misuse.
2. Enhance task relevance: Adapt the input to align closely with the intended task or goal, removing ambiguities and ensuring clarity of purpose.
3. Align with contextual history: Incorporate and respect the context of previous interactions or inputs to maintain logical consistency and coherence.
Output only the rewritten text.
User's input:
}
\end{quote}

\paragraph{Paraphrasing baseline.}
Paraphrase experiments use the following system prompt, adapted directly from~\citep{jain2023baseline}:
\begin{quote}
\small
\texttt{
Paraphrase the following text:
}
\end{quote}

\section{BA Details}\label{appendix:baDetails}
Utilized backdoor attacks details are listed in Table~\ref{tab:backdoor-attacks}.
\begin{table*}[t]
\centering
\caption{Details of implemented backdoor attacks.}
\label{tab:backdoor-attacks}
\begin{tabular}{@{}lll}
\toprule
\textbf{Attack} & \textbf{Trigger(s)} & \textbf{Injection Location}\\
\midrule
BadNets~\citep{gu2017badnets} & ``BadMagic'' & Random\\
\addlinespace
VPI~\citep{yan-etal-2024-backdooring} & ``Discussing OpenAI'' & Beginning of instruction\\
\addlinespace
Sleeper~\citep{hubinger2024sleeper} & ``Current year: 2024'' & Beginning of instruction\\
\addlinespace
MTBA~\citep{DBLP:journals/corr/abs-2401-15295} & Randomly selected & Based on trigger\\
\addlinespace
CTBA~\citep{huang2024composite} & All triggers simultaneously & Non-overlapping locations in instruction\\
\bottomrule
\end{tabular}
\end{table*}

\section{Proof of Theorem~\ref{theorem:posterior}}\label{appendix:proof}
\begin{theorem}
  Let $\zeta \in \{B, M\}$ be a latent random variable---either benign ($B$) or malicious ($M$).  Let $c^{+}$ and $c^{-}$ be the context with and without \projectname{}.  Then $p(\zeta{=}B \mid c^{+}) > p(\zeta{=}B \mid c^{-})$.
\end{theorem}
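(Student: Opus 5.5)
The plan is to model the rewriter's context as observed evidence about the latent variable $\zeta$ and apply Bayes' rule. The OBBR context is the CBBR context plus retrieved benign samples, so I would write $c^{+} = (c^{-}, r)$ with $r = (b_1,\ldots,b_k)$ and $c^{-} = [s; x]$. I would make two modeling assumptions explicit, since the statement leaves them implicit:
\begin{itemize}
\item[(i)] Given $\zeta$, the retrieved samples are conditionally independent of $c^{-}$, so $p(r \mid \zeta, c^{-}) = p(r \mid \zeta)$.
\item[(ii)] Every $b_i \in \benign_{\text{ref}} \subset \benign$ is strictly more likely under the benign latent than under the malicious one, so $p(b_i \mid \zeta = B) > p(b_i \mid \zeta = M)$.
\end{itemize}
Assumption (ii) can be justified by reading $\zeta$ as the latent mode the rewriter infers, in the spirit of the implicit-Bayesian-inference view of in-context learning. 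Benign demonstrations are by construction drawn from $\benign$.

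The central step is to pass to log-odds. By Bayes' rule and (i),
\begin{equation*}
\log\frac{p(\zeta=B\mid c^{+})}{p(\zeta=M\mid c^{+})} = \log\frac{p(\zeta=B\mid c^{-})}{p(\zeta=M\mid c^{-})} + \log\frac{p(r\mid \zeta=B)}{p(r\mid \zeta=M)}.
\end{equation*}
So the posterior odds under OBBR equal the CBBR posterior odds times a likelihood ratio $\Lambda(r)$. I would then show that $\Lambda(r) > 1$. I would factor $r$ over the $k$ retrieved samples, either by conditional independence of the $b_i$ given $\zeta$ or by chain-rule factors $p(b_i \mid b_{<i}, \zeta)$. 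Assumption (ii) makes each factor exceed one, so the sum of log-ratios is strictly positive.

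Finally, the map $o \mapsto o/(1+o)$ is strictly increasing, and $p(\zeta=B\mid c^{-})\in(0,1)$, which holds because both latents have positive prior mass. Together these give $p(\zeta=B\mid c^{+}) > p(\zeta=B\mid c^{-})$.

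The main obstacle is justifying (i) and (ii). In particular, retrieval depends on $x$ through the similarity in Eq.~\ref{eq:retriever}, which threatens the conditional independence in (i). I would first handle this by conditioning on $x$ throughout. The retrieval is a deterministic function of $x$ and $\benign_{\text{ref}}$, so the likelihood term becomes $p(r \mid \zeta, x)$. The strict inequality then only needs this to favor $B$, which again follows from each $b_i$ being benign. If that is not convincing, I would instead state (ii) in the conditioned form $p(r\mid\zeta=B,c^{-}) > p(r\mid \zeta=M, c^{-})$ as a standing assumption on the generative model. The rest of the argument goes through unchanged with this form.
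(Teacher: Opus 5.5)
Your proposal is correct and takes the same route as the paper. The paper also writes the OBBR posterior odds as the CBBR odds times the likelihood ratio $p(b_{1:k}\mid\zeta=B)/p(b_{1:k}\mid\zeta=M)$, asserts that this ratio exceeds one because the $b_i$ come from a benign corpus, and leaves implicit the conditional independence (i) and the requirement $p(\zeta=B\mid c^{-})\in(0,1)$ that you state. One caution: your ``deterministic retrieval'' justification undermines the argument, since if $r$ is a deterministic function of $x$ then $p(r\mid\zeta,x)$ is the same for both values of $\zeta$ and the ratio is exactly one, so rely instead on your fallback of assuming it as a property of the rewriter's implicit likelihood of the text $r$, which is effectively what the paper does.
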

\begin{proof}
  Let $c^{+} = [s;\, b_1;\, \ldots;\, b_k;\, x]$ and $c^{-} = [s;\, x]$ be the context with and without open-book benign samples.
  By definition, we have $c^{+} = [s;\, b_1;\, \ldots;\, b_k;\, x]$ and $c^{-} = [s;\, x]$.
  Consider the rewriter's next-token predictive distribution at decoding step $t$ with prefix $y_{1:t-1}$.  With CBBR, we have:
  \begin{equation}\label{eq:uncond}
    \mathbf{P}(y_t \mid s, x, y_{1:t-1}).
  \end{equation}

  With OBBR, we have:
  \begin{equation}\label{eq:cond}
    \mathbf{P}(y_t \mid s, b_{1:k}, x, y_{1:t-1}).
  \end{equation}

Note that, for an arbitrary context $c$ at time step $t$, the rewriter's conditional distribution may be written as:
\begin{equation}\label{eq:mixture}
\mathbf{P}(y_t \mid c, y_{1:t-1}) = \sum_{\zeta \in \{B,M\}} \mathbf{P}(y_t \mid \zeta, x, y_{1:t-1}) \cdot p(\zeta \mid c, y_{1:t-1}),
\end{equation}
where $\mathbf{P}(y_t \mid \zeta, x, y_{1:t-1})$ is the latent-variable-conditioned token distribution and $p(\zeta \mid c, y_{1:t-1})$ is the posterior.  Note that the context $c$ influences generation solely through the posterior $p(\zeta \mid c, y_{1:t-1})$.

Since $b_{1:k}$ are drawn from a benign corpus, we thus have
\begin{equation}\label{equation:regimePosterior}
p(b_{1:k} \mid \zeta{=}B) > p(b_{1:k} \mid \zeta{=}M)  
\end{equation}
By Bayes' Theorem, we thus have:
\begin{equation}\label{eq:odds}
\frac{p(\zeta{=}B \mid c^{+})}{p(\zeta{=}M \mid c^{+})} = \underbrace{\frac{p(b_{1:k} \mid \zeta{=}B)}{p(b_{1:k} \mid \zeta{=}M)}}_{>\, 1} \cdot \frac{p(\zeta{=}B \mid c^{-})}{p(\zeta{=}M \mid c^{-})},
\end{equation}
so that $p(\zeta{=}B \mid c^{+}) > p(\zeta{=}B \mid c^{-})$.

Thus, OBBR increases the posterior probability of generating benign samples over CBBR.
\end{proof}

\section{Proof of Theorem~\ref{theorem:benign}}\label{appendix:theoremProof}
\begin{theorem}
  Let $y^{+}$ and $y^{-}$ be the sequences generated with open-book and closed-book rewriting, respectively.  Then we have
\begin{equation}
\Pr(y^{+} \in \benign) \;>\; \Pr(y^{-} \in \benign).
\end{equation}

\end{theorem}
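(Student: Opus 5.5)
The plan is to lift the posterior comparison of Theorem~\ref{theorem:posterior} from the context level to the level of whole generated sequences, using the same latent-mixture model (Eq.~\ref{eq:mixture}) that underlies the proof in Appendix~\ref{appendix:proof}.

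\textbf{Step 1: write sequence probabilities as a two-component mixture.} I would work with the sequence-level analogue of Eq.~\ref{eq:mixture}, in which the latent regime $\zeta$ is fixed once per generated sequence rather than resampled per token. For any context $c$,
\begin{equation*}
\Pr(y \in \benign \mid c) = \sum_{\zeta \in \{B,M\}} \Pr(y \in \benign \mid \zeta, x)\, p(\zeta \mid c).
\end{equation*}
This is consistent with the paper's premise that the context acts on generation only through the posterior over $\zeta$. Write $q_B = \Pr(y \in \benign \mid \zeta{=}B, x)$ and $q_M = \Pr(y \in \benign \mid \zeta{=}M, x)$. Neither quantity depends on whether retrieval was used.

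\textbf{Step 2: reduce to a linear function of the posterior.} Setting $\rho(c) = p(\zeta{=}B \mid c)$, the mixture gives
\begin{equation*}
\Pr(y \in \benign \mid c) = q_M + (q_B - q_M)\,\rho(c).
\end{equation*}
This is affine in $\rho$. Hence
\begin{equation*}
\Pr(y^{+} \in \benign) - \Pr(y^{-} \in \benign) = (q_B - q_M)\bigl(\rho(c^{+}) - \rho(c^{-})\bigr).
\end{equation*}
By Theorem~\ref{theorem:posterior}, $\rho(c^{+}) > \rho(c^{-})$. The result therefore follows once we establish $q_B > q_M$, i.e.\ that the benign regime is strictly more likely to emit benign sequences than the malicious regime. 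I would argue this is essentially the defining semantics of the latent variable $\zeta$ and state it as a mild modeling assumption. Finally, if $x$ is random, taking expectations over $x$ preserves the strict inequality.

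\textbf{Main obstacle.} The hardest step is justifying the sequence-level factorization. Theorem~\ref{theorem:posterior} and Eq.~\ref{eq:mixture} are phrased per token, with a posterior $p(\zeta \mid c, y_{1:t-1})$ that is updated along the prefix. Multiplying the per-token mixtures over $t$ does not directly give a two-component mixture over whole sequences.

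My first attempt would be to note that if $\zeta$ is a single latent variable for the entire generation, then Bayesian updating along the prefix is exactly what marginalizing $\zeta$ at the sequence level produces. In that case $\pi(y \mid c) = \sum_\zeta \pi(y \mid \zeta, x)\,p(\zeta \mid c)$ holds exactly, and only the initial posterior $p(\zeta \mid c)$ depends on the context. Since Theorem~\ref{theorem:posterior} controls precisely that initial posterior, the argument goes through.

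If a per-token treatment were required instead, I would fall back on an inductive coupling argument. It would show that the odds ratio from Eq.~\ref{eq:odds} is preserved under the common prefix updates, because the likelihood factor $p(b_{1:k} \mid \zeta)$ multiplies the odds identically at every step.
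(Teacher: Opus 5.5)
Your proposal follows the paper's proof essentially step for step: marginalize over $\zeta$ so that $\Pr(y\in\benign\mid c)$ is affine in $p(\zeta{=}B\mid c)$, factor the difference as $(q_B-q_M)\bigl(p(\zeta{=}B\mid c^{+})-p(\zeta{=}B\mid c^{-})\bigr)$, and invoke Theorem~\ref{theorem:posterior}. Your explicit assumption $q_B>q_M$ is what the strict inequality actually requires, since the paper only asserts $q_B\ge q_M$ and that gives just a weak inequality. Your single-latent justification of the sequence-level mixture also fills in a step the paper takes for granted.
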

\begin{proof}
From Theorem~\ref{theorem:posterior}, we have:
\begin{equation*}
p(\zeta{=}B \mid c^{+}) > p(\zeta{=}B \mid c^{-})  
\end{equation*}
Let $y^{+}$ denote the output generated under $c^{+}$ and $y^{-}$ the output under $c^{-}$.  Marginalizing over $\zeta$:
\begin{align}
  \Pr(y \in \benign \mid c) =& \Pr(y \in \benign \mid \zeta{=}B,\, x) \cdot p(\zeta{=}B \mid c)\label{eq:seq}\\
  +&\; \Pr(y \in \benign \mid \zeta{=}M,\, x)\;\cdot\; p(\zeta{=}M \mid c).\nonumber
\end{align}

Applying this to both contexts and taking the difference:
\begin{align*}
\Pr(y^+& \in \benign) - \Pr(y^- \in \benign) \\
=& \Pr(y \in \benign \mid \zeta=B, x) \cdot [p(\zeta=B \mid c^+)
- p(\zeta=B \mid c^-)]  \\
&\quad + \Pr(y \in \benign \mid \zeta=M, x) \cdot [p(\zeta=M \mid c^+) - p(\zeta=M \mid c^-)]  \\
&= \Pr(y \in \benign \mid \zeta=B, x) \cdot [p(\zeta=B \mid c^+) - p(\zeta=B \mid c^-)]  \\
&\quad + \Pr(y \in \benign \mid \zeta=M, x) \cdot [p(\zeta=B \mid c^-) - p(\zeta=B \mid c^+)]  \\
&= [\Pr(y \in \benign \mid \zeta=B, x) - \Pr(y \in \benign \mid \zeta=M, x)] \cdot [p(\zeta=B \mid c^+) - p(\zeta=B \mid c^-)].
\end{align*}

Since $\Pr(y \in \benign \mid \zeta=B, x) \geq \Pr(y \in \benign \mid \zeta=M, x)$ 
by definition of $\zeta$, and $p(\zeta=B \mid c^+) > p(\zeta=B \mid c^-)$ 
from Theorem~\ref{theorem:posterior}, we have that both factors are non-negative, with the second 
factor strictly positive.  We thus have:
\begin{equation*}
  \Pr(y^{+} \in \benign) \;>\; \Pr(y^{-} \in \benign).
\end{equation*}

\end{proof}

\end{document}